\def\pgf@circ@spst@path#1{\pgf@circ@bipole@path{spst}{#1}}
\tikzset{switch/.style = {\circuitikzbasekey, /tikz/to path=\pgf@circ@spst@path, l=#1}}
\tikzset{spst/.style = {switch = #1}}
\let\proof\@undefined                        
\let\endproof\@undefined                  
\algnewcommand{\algorithmicgoto}{\textbf{go to}}%
\algnewcommand{\Goto}[1]{\algorithmicgoto~\ref{#1}}%
\algnewcommand{\LineComment}[1]{\Statex \(\triangleright\) #1}
\algnewcommand{\LineCommentN}[1]{\Statex \hspace{1cm}\(\triangleright\) #1}
\newcommand{\argmax}{\operatornamewithlimits{arg\ max}}
\newtheorem{prop}{Proposition} 
\newtheorem{thm}{Theorem}
	\newtheorem{assumption}{Assumption}
\newtheorem{lem}{Lemma}
\newtheorem{defn}{Definition}
\newtheorem{rem}{Remark}
\newtheorem{problem}{Problem}
\let\oldbibliography\thebibliography
\renewcommand{\thebibliography}[1]{%
  \oldbibliography{#1}%
}
\newcommand{\moh}[1]{{\color{black} #1}}
\newcommand{\moha}[1]{{\color{black} #1}}
\begin{document}

\title{\LARGE \bf Computationally Efficient State and Model Estimation \\ via Interval Observers for Partially Unknown Systems} 

\author{%
Mohammad Khajenejad and Zeyuan Jin \\
\thanks{
 M. Khajenejad is with the Department of Mechanical Engineering, The University of Tulsa, Tulsa, OK, USA (e-mail: mok7673@utulsa.edu). Z. Jin is with SafeAI, Inc., Santa Clara, CA, USA (e-mail: zjin43@asu.edu).}
}

\maketitle
\thispagestyle{empty}
\pagestyle{empty}

\begin{abstract}
This paper addresses the synthesis of interval observers for partially unknown nonlinear systems subject to bounded noise, aiming to simultaneously estimate system states and learn a model of the unknown dynamics. Our approach leverages Jacobian sign-stable (JSS) decompositions, tight decomposition functions for nonlinear systems, and a data-driven over-approximation framework to construct interval estimates that provably enclose the true augmented states. By recursively computing tight and tractable bounds for the unknown dynamics based on current and past interval framers, we systematically integrate these bounds into the observer design. Additionally, we formulate semi-definite programs (SDP) for observer gain synthesis, ensuring input-to-state stability and optimality of the proposed framework. Finally, simulation results demonstrate the computational efficiency of our approach compared to a method previously proposed by the authors. 
 \end{abstract}
\vspace{-0.2cm}
\section{Introduction}
Motivated by the need to ensure safe and smooth operation under various forms of uncertainties in many safety-critical engineering applications such as fault detection, urban transportation, attack (unknown input) mitigation and detection in cyber-physical systems and aircraft tracking \cite{liu2011robust,yong2016tcps,yong2018simultaneous}, 
robust set-valued algorithms for state and input estimation have been recently developed to find compatible state estimates. 
In addition, dynamic models of many
practical systems are often only  partially known.
Thus, the development of algorithms that can combine model learning and set membership estimation approaches is 
a critical and interesting problem.   

\emph{Literature review}.
Various model-based approaches have been proposed in the literature for designing set or interval observers across different system classes~  \cite{jaulin2002nonlinear,kieffer2004guaranteed,moisan2007near,bernard2004closed,raissi2010interval,raissi2011interval,mazenc2011interval,mazenc2013robust,wang2015interval,efimov2013interval,zheng2016design,mazenc2014interval,ellero2019unknown,yong2018simultaneous,khajenejad2019simultaneous,khajenejadasimultaneous,khajenejad2020nonlinear}, These include linear time-invariant (LTI) systems~\cite{mazenc2011interval},  linear parameter-varying (LPV) systems~\cite{wang2015interval,ellero2019unknown}, Metzler and/or partially linearizable systems~\cite{raissi2011interval,mazenc2013robust}, cooperative systems~\cite{raissi2010interval,raissi2011interval}, Lipschitz nonlinear systems~\cite{efimov2013interval}, monotone nonlinear systems~\cite{moisan2007near,bernard2004closed} and uncertain nonlinear systems~\cite{zheng2016design} systems. However, many of these methods do not account for the presence of unknown inputs—which may arise from external agents, disturbances, attacks, or simply unobserved signals—nor do they address unknown dynamics. More recent efforts have tackled the design of set-valued observers capable of simultaneously estimating both states and unknown inputs, particularly for LTI~\cite{yong2018simultaneous}, LPV~\cite{khajenejad2019simultaneous}, switched linear~\cite{khajenejadasimultaneous} and nonlinear systems~ \cite{khajenejad2020nonlinear,khajenejad2020simultaneous} considering bounded-norm noise.   
 
 On the other hand, when the system model is not precisely known, set-valued data-driven approaches have gained increasing attention in recent years~\cite{Milanese2004SetMI,canale2014nonlinear,zabinsky2003optimal,beliakov2006interpolation,calliess2014conservative}. These methods leverage input-output data to abstract or over-approximate unknown dynamics or functions, aiming to identify a set of dynamics that effectively frame or bracket the true system behavior~\cite{Milanese2004SetMI,canale2014nonlinear}. This problem has been studied under different regularity assumptions on the unknown dynamics, including univariate Lipschitz continuity~\cite{zabinsky2003optimal}, multivariate Lipschitz continuity~\cite{beliakov2006interpolation}, and H\"{o}lder continuity~\cite{calliess2014conservative}.
  
In our previous work~\cite{khajenejad2021modelstate}, we initiated an approach that combined model abstractions~\cite{Jin2020datadriven} with specific types of decomposition functions~\cite{khajenejad2020simultaneous} to construct interval framers for partially known dynamics subject to bounded noise. This framework accommodated a broad class of nonlinear state and observation mappings while treating the mapping of the unknown input dynamics as an entirely unknown function. However, the method in~\cite{khajenejad2021modelstate} required computationally expensive online optimizations to obtain affine over-approximations/abstractions of nonlinear dynamics at each time step. Moreover, it was highly case-sensitive, as it did not incorporate any stabilizing gain synthesis. As a result, the stability of the obtained interval-valued estimates was heavily dependent on the specific system dynamics. This paper proposes an alternative approach to address these limitations and ensure a more robust and systematic design. 

\emph{Contributions.}
This work bridges the gap between model-based set membership observer design approaches  and data-driven function approximation methods (i.e., model learning techniques).
Specifically, our approach extends observer design methodologies from systems with fully known yet noisy dynamics to partially unknown systems. By leveraging data-driven over-approximation and abstraction of unknown dynamics~\cite{Jin2020datadriven} we develop a recursive framework that refines the estimation of the unknown system components using noisy observation data and interval estimates. Our method constructs tight and tractable bounds for the unknown dynamics as a function of current and past interval framers, allowing these bounds to be systematically integrated into the observer system.

Furthermore, we prove that the proposed observer maintains correctness, ensuring that the framer property~\cite{mazenc2013robust} holds. We also show that our estimation and abstraction of the unknown dynamics become progressively more precise and tighter over time. Finally, we exploit Jacobian sign-stable decompositions and tight mixed-monotone decomposition of nonlinear systems~\cite{moh2022intervalACC} and formulate a tractable semi-definite program (SDP) for observer gain synthesis. This ensures input-to-state stability and guarantees the optimality of the proposed design.

\section{Preliminaries}

{\emph{{Notation}.}} $\mathbb{N}_n$, $\mathbb{N}$, $\mathbb{R}^{n \times p}$, $\mathbb{R}^n$, and $\mathbb{R}^n_{>0}$ represent, respectively, the natural numbers up to $n$, the set of natural numbers, matrices of size $n$ by $p$, $n$-dimensional Euclidean space, and positive vectors of dimension $n$. For a vector $v \in \mathbb{R}^n$, the $p$-norm is defined as $\|v\|_{p} \triangleq \left(\sum_{i=1}^n {|v_i|^p}\right)^{\frac{1}{p}}$. For simplicity, the $2$-norm of $v$ is denoted by $\|v\|$. 
For a matrix $M \in \mathbb{R}^{n \times p}$, the $i$-th row and $j$-th column entry is denoted by $M_{ij}$. 
We define $M^{\oplus} \triangleq \max(M, \mathbf{0}_{n \times p})$, {$M^{\ominus} \triangleq M^{\oplus} - M$, and $|M| \triangleq M^{\oplus} + M^{\ominus}$}, which represents the element-wise absolute value of $M$. Additionally, we use $M \succ 0$ and $M \prec 0$ (or $M \succeq 0$ and $M \preceq 0$) to indicate that $M$ is positive definite and negative definite (or positive semi-definite and negative semi-definite), respectively. All vector and matrix inequalities are  element-wise inequalities. The zero vectors in $\mathbb{R}^n$, and zero matrices of size $n \times p$ are denoted by $\mathbf{0}_{n \times p}$ and $\mathbf{0}_{n}$, respectively. Moreover, a continuous function $\alpha: [0,a) \to \mathbb{R}_{\ge 0}$ belongs to class $\mathcal{K}$ if it is strictly increasing with $\alpha(0)=0$. 
{Lastly, a continuous function} $\lambda : {[0,a) \times [0,\infty)} \to \mathbb{R}_{\ge 0}$ is {considered to be in} class $\mathcal{KL}$ if, for every fixed $t \geq 0$, {the function} $\lambda(s, t)$ {is in} class $\mathcal{K}$; for each fixed $s \geq 0$, $\lambda(s, t)$ {decreases with respect to $t$ and satisfies $\lim_{t \to \infty} \lambda(s, t) = 0$}.

\begin{defn}[Interval]\label{defn
} $\mathcal{I} \triangleq [\underline{z}, \overline{z}] \subset \mathbb{R}^n$ is an $n$-dimensional interval defined as the set of all vectors $z \in \mathbb{R}^{n_z}$ such that $\underline{z} \leq z \leq \overline{z}$. Interval matrices can be similarly defined. \end{defn}

\begin{prop}\cite[Lemma 1]{efimov2013interval}\label{prop:bounding}
Let $A \in \mathbb{R}^{m \times n}$ and $\underline{x} \leq x \leq \overline{x} \in \mathbb{R}^n$. Then\moh{,} $A^\oplus\underline{x}-A^{\ominus}\overline{x} \leq Ax \leq A^\oplus\overline{x}-A^{\ominus}\underline{x}$. As a corollary, if $A$ is non-negative, then $A\underline{x} \leq Ax \leq A\overline{x}$.
\end{prop}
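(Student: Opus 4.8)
The plan is to exploit the elementary decomposition $A = A^\oplus - A^\ominus$ into its non-negative and non-positive parts. From the notation, $A^\oplus = \max(A,\mathbf{0}_{m\times n}) \geq \mathbf{0}_{m\times n}$ element-wise, and $A^\ominus = A^\oplus - A = \max(-A,\mathbf{0}_{m\times n}) \geq \mathbf{0}_{m\times n}$ as well, so both matrices have non-negative entries and the identity $Ax = A^\oplus x - A^\ominus x$ holds for every $x$. The whole proposition will then reduce to handling the two non-negative matrices $A^\oplus$ and $A^\ominus$ separately and recombining, being careful about the sign reversal on the $A^\ominus$ term.

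The key ingredient I would isolate first is the order-preservation property of non-negative matrices: if $B \in \mathbb{R}^{m\times n}$ satisfies $B \geq \mathbf{0}_{m\times n}$ and $u \leq v$ in $\mathbb{R}^n$, then $Bu \leq Bv$. This follows immediately by writing the $i$-th component of the difference as $(Bv - Bu)_i = \sum_{j=1}^{n} B_{ij}(v_j - u_j)$ and observing that each summand is a product of two non-negative numbers, so the sum is non-negative for every row $i$. With this lemma in hand, the upper bound is obtained by applying it to each part: since $A^\oplus \geq \mathbf{0}_{m\times n}$ and $x \leq \overline{x}$ we get $A^\oplus x \leq A^\oplus \overline{x}$, while since $A^\ominus \geq \mathbf{0}_{m\times n}$ and $\underline{x} \leq x$ we get $A^\ominus \underline{x} \leq A^\ominus x$, i.e. $-A^\ominus x \leq -A^\ominus \underline{x}$. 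Adding these two element-wise inequalities yields $Ax = A^\oplus x - A^\ominus x \leq A^\oplus \overline{x} - A^\ominus \underline{x}$, which is exactly the claimed upper bound. The lower bound follows by the symmetric argument: $A^\oplus \underline{x} \leq A^\oplus x$ from $\underline{x}\leq x$, and $A^\ominus x \leq A^\ominus \overline{x}$ from $x \leq \overline{x}$ (hence $-A^\ominus \overline{x} \leq -A^\ominus x$), which add to give $A^\oplus \underline{x} - A^\ominus \overline{x} \leq Ax$.

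The corollary is then an immediate specialization: when $A \geq \mathbf{0}_{m\times n}$, we have $A^\oplus = A$ and $A^\ominus = \mathbf{0}_{m\times n}$, so both bounds collapse to $A\underline{x} \leq Ax \leq A\overline{x}$. I do not anticipate a genuine obstacle here, as the result is elementary; the only point requiring care is tracking the inequality flip induced by the minus sign in front of $A^\ominus$ when combining the two partial bounds, which is why I would state and prove the non-negative-matrix monotonicity lemma explicitly before assembling the two estimates.
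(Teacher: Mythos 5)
Your proof is correct. Note, however, that the paper does not prove this proposition at all: it is imported verbatim from the cited reference (Lemma~1 of the Efimov et al.\ paper on interval observers for Lipschitz systems), so there is no in-paper argument to compare against. Your route---splitting $A = A^\oplus - A^\ominus$ into two non-negative matrices, establishing the order-preservation property of non-negative matrices componentwise, and then combining the two partial bounds while tracking the sign flip on the $A^\ominus$ term---is exactly the standard proof of this fact, and is essentially the argument given in the cited source, so nothing is missing or superfluous.
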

\begin{defn}[Lipschitz Continuity]\label{defn:lip}
The mapping $q$ is $\kappa_q$-Lipschitz continuous on $\mathbb{R}^n$, if there exists $\kappa^q >0$, such that $\|q(z_1)-q(z_2)\| \leq \kappa^q \|z_1-z_2\|$, $ \forall z_1,z_2 \in \mathcal{Z}$.
\end{defn}
\begin{defn}[Jacobian Sign-Stability]
  The mapping $q$ is \emph{Jacobian sign-stable} (JSS) if the sign of
  each element of the Jacobian matrix $J_q(z)$ is constant for all
  $z \in \mathcal{Z}$.
\end{defn}
\begin{prop}[JSS Decomposition]
  {\cite[Proposition 2]{moh2022intervalACC}}
  \label{prop:JSS_decomp}
  If the Jacobian of $q$ is bounded, i.e., it satisfies
  $\underline{J}_q\le J_q(z) \le \overline{J}_q, \forall z \in \mathcal{Z}$, then $q$ can
  be decomposed as:
  \begin{align*}
  \begin{array}{r}
     q(z) = A z + \mu(z),
     \end{array}
  \end{align*}
  where the $(i,j)^{th}$ element of $A \in \real^{n \times n}$
satisfies
  \begin{align*}
  \begin{array}{r}
    A_{ij} = (\underline{J}_f)_{ij} \ \text{ or } A_{ij} = (\overline{J}_f)_{ij}
    \end{array}
  \end{align*}
  and the function $\mu$ is JSS.
\end{prop}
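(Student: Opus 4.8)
The plan is to build $A$ entry-by-entry from the two Jacobian bounds and then verify that the residual $\mu(z) \triangleq q(z) - Az$ is JSS. First I would differentiate the candidate decomposition: since $A$ is a constant matrix, $J_\mu(z) = J_q(z) - A$, so element-wise $(J_\mu(z))_{ij} = (J_q(z))_{ij} - A_{ij}$ for every $z \in \mathcal{Z}$. The whole argument then reduces to choosing each $A_{ij}$ so that $(J_\mu(z))_{ij}$ keeps a fixed sign as $z$ ranges over $\mathcal{Z}$.

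The central observation is that the hypothesis $\underline{J}_q \le J_q(z) \le \overline{J}_q$ holds element-wise for all $z$, so each entry $(J_q(z))_{ij}$ is confined to the fixed interval $[(\underline{J}_q)_{ij}, (\overline{J}_q)_{ij}]$. Choosing $A_{ij} = (\underline{J}_q)_{ij}$ gives $0 \le (J_\mu(z))_{ij} \le (\overline{J}_q)_{ij} - (\underline{J}_q)_{ij}$, which is non-negative because $\overline{J}_q \ge \underline{J}_q$; symmetrically, choosing $A_{ij} = (\overline{J}_q)_{ij}$ gives $(\underline{J}_q)_{ij} - (\overline{J}_q)_{ij} \le (J_\mu(z))_{ij} \le 0$, which is non-positive. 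Either selection forces the sign of $(J_\mu(z))_{ij}$ to be constant over $\mathcal{Z}$, so both are admissible and the construction is flexible.

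Because this sign-fixing argument is entirely local to each index pair and the two candidate choices are independent across $(i,j)$, I can make the selection separately for each entry and assemble $A$; by construction every entry then satisfies $A_{ij} = (\underline{J}_q)_{ij}$ or $A_{ij} = (\overline{J}_q)_{ij}$, as required. The resulting $J_\mu$ has each element of constant sign on $\mathcal{Z}$, which is exactly the JSS property, and the decomposition $q(z) = Az + \mu(z)$ holds by the very definition of $\mu$. There is no serious obstacle beyond this bookkeeping; the one point to be careful about is that the sign-constancy of each $\mu$-Jacobian entry is inherited uniformly in $z$ from the \emph{global} Jacobian bounds, so the argument genuinely relies on the hypothesis holding for all $z \in \mathcal{Z}$ simultaneously rather than at a single point.
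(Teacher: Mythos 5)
Your proof is correct and is essentially the canonical argument for this result: the paper itself does not reprove the proposition (it is cited from \cite[Proposition 2]{moh2022intervalACC}), and the proof there proceeds exactly as you do, setting $\mu(z) = q(z) - Az$ so that $J_\mu(z) = J_q(z) - A$ and choosing $A_{ij}$ as either $(\underline{J}_q)_{ij}$ or $(\overline{J}_q)_{ij}$ to pin $(J_\mu(z))_{ij}$ to a fixed sign (non-negative or non-positive, respectively) uniformly over $\mathcal{Z}$. Note also that you correctly read the statement's $(\underline{J}_f)_{ij}$, $(\overline{J}_f)_{ij}$ as typos for $(\underline{J}_q)_{ij}$, $(\overline{J}_q)_{ij}$.
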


\begin{defn}
  \cite[Definition 4]{yang2019sufficient}
  \label{def:decomp}
  A function $q_d: \mathcal{Z} \times \mathcal{Z} \to \mathbb{R}^{n}$ is a
  \emph{mixed-monotone decomposition function} for $q$ if i)$q_d(z,z)=q(z)$, ii) $f_d$ is monotonically increasing in its first, and monotonically decreasing in its second argument.
\end{defn}
 It follows directly from Definition \ref{def:decomp} that 
\begin{align}\label{eq:dec_ineq}
\begin{array}{r}
q_d(\underline{z},\overline{z}) \leq q(z) \leq q_d(\overline{z},\underline{z}) \quad \forall z \in [\underline{z},\overline{z}] \subseteq \mathcal{Z}.
\end{array}
\end{align}
\begin{prop}[Tight Mixed-Monotone Decomposition Functions]
  \cite[Proposition 4 \& Lemma 3]{moh2022intervalACC}
  \label{prop:tight_decomp}
  Suppose $q$ is JSS and has a bounded Jacobian. Then, the $i^{th}$
  component of a mixed-monotone decomposition function $q_d$ is given
  by
  \begin{align}\label{eq:tight_decomposition}
  \begin{array}{r}
    q_{d,i}(z_1,z_2) \triangleq q_i(D^i z_1 + (I_m - D^i) z_2),
    \end{array}
  \end{align}
  with
  $D^i =
  \mathrm{diag}(\max(\mathrm{sgn}((\overline{J}_q)_i),\mathbf{0}_{1\times m}))$.
  Additionally, for any interval $[\underline{z}, \overline{z}] \subseteq \mathcal{Z}$, it
  holds that $\delta^q_d \leq {F}_q e^z$, where
  ${F}_q \triangleq \overline{J}_q^\oplus + \underline{J}_q^\ominus$, $e^z \triangleq \overline{z}-\underline{z}$, and
  $\delta^f_d \triangleq f_d(\underline{z},\overline{z}) - f_d(\overline{z},
  \underline{z})$. 
\end{prop}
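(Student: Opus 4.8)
The plan is to establish the two assertions separately: that \eqref{eq:tight_decomposition} is a legitimate mixed-monotone decomposition in the sense of Definition~\ref{def:decomp}, and that the resulting over-approximation width admits the stated linear bound $F_q e^z$.

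For the decomposition property, condition (i) follows by inspection: taking $z_1 = z_2 = z$ collapses $D^i z_1 + (I_m - D^i) z_2$ to $z$, so $q_{d,i}(z,z) = q_i(z)$ for every $i$, hence $q_d(z,z) = q(z)$. For condition (ii) I would differentiate through the chain rule: with $\xi \triangleq D^i z_1 + (I_m - D^i) z_2$, the partial of $q_{d,i}$ in the $j$-th entry of $z_1$ is $(\partial q_i / \partial \xi_j)\, D^i_{jj}$ and in the $j$-th entry of $z_2$ is $(\partial q_i / \partial \xi_j)(1 - D^i_{jj})$. The governing observation is that, since $q$ is JSS, $\mathrm{sgn}(\partial q_i/\partial z_j)$ is constant on $\mathcal{Z}$, and $D^i_{jj} = \max(\mathrm{sgn}((\overline{J}_q)_{ij}),0)$ equals $1$ exactly on the coordinates where this partial is non-negative and $0$ where it is negative. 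Thus $D^i$ routes each increasing coordinate into the first argument and each decreasing coordinate into the second, forcing all first-argument partials to be non-negative and all second-argument partials non-positive --- precisely monotonicity up in $z_1$ and down in $z_2$.

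For the width bound I would work component-wise. Set $a^i \triangleq D^i \underline{z} + (I_m - D^i)\overline{z}$ and $b^i \triangleq D^i \overline{z} + (I_m - D^i)\underline{z}$; both lie in the convex box $[\underline{z},\overline{z}]\subseteq\mathcal{Z}$, so does the segment joining them, and a direct computation gives $b^i - a^i = (2 D^i - I_m) e^z$. The scalar mean value theorem applied to $q_i$ along this segment produces a point $\zeta^i \in [\underline{z},\overline{z}]$ with $q_i(b^i) - q_i(a^i) = J_{q_i}(\zeta^i)(2D^i - I_m) e^z$, where the gradient row obeys $(\underline{J}_q)_i \le J_{q_i}(\zeta^i) \le (\overline{J}_q)_i$. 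Because $e^z \ge 0$, it then suffices to dominate the row $J_{q_i}(\zeta^i)(2D^i - I_m)$ entrywise by $(F_q)_i$.

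The crux --- and the step I expect to be the main obstacle --- is this entrywise comparison, which splits into two sign regimes that must be reconciled with the $\oplus/\ominus$ decomposition of $F_q = \overline{J}_q^\oplus + \underline{J}_q^\ominus$. On coordinates where $\partial q_i/\partial z_j \ge 0$ one has $2D^i_{jj}-1 = 1$, the $j$-th entry equals $(J_{q_i}(\zeta^i))_j \le (\overline{J}_q)_{ij} = (\overline{J}_q^\oplus)_{ij} = (F_q)_{ij}$, the $\ominus$ term vanishing; on coordinates where it is negative, $2D^i_{jj}-1 = -1$, and the entry equals $-(J_{q_i}(\zeta^i))_j \le -(\underline{J}_q)_{ij} = (\underline{J}_q^\ominus)_{ij} = (F_q)_{ij}$, the $\oplus$ term vanishing. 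Combining the regimes gives $J_{q_i}(\zeta^i)(2D^i - I_m) \le (F_q)_i$, and multiplying by $e^z \ge 0$ yields $\delta^q_{d,i} \le (F_q e^z)_i$, i.e.\ $\delta^q_d \le F_q e^z$, where $\delta^q_d$ denotes the over-approximation width $q_d(\overline{z},\underline{z}) - q_d(\underline{z},\overline{z})$; under the reversed sign convention the inequality is immediate since $F_q e^z \ge 0$. Keeping the active Jacobian bound matched to the correct sign regime throughout is the only delicate bookkeeping.
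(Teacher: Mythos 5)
This proposition is imported verbatim from \cite[Proposition 4 \& Lemma 3]{moh2022intervalACC}; the present paper states it without proof, so there is no in-paper argument to compare yours against. Judged on its own merits, your proof is correct and self-contained: the chain-rule, sign-routing argument shows that \eqref{eq:tight_decomposition} satisfies both conditions of Definition~\ref{def:decomp} (JSS is precisely what lets $D^i$ send the increasing coordinates into the first argument and the decreasing ones into the second), and the width bound follows from the mean value theorem along the segment joining your $a^i$ and $b^i$, plus the entrywise case analysis matching $J_{q_i}(\zeta^i)(2D^i - I_m)$ against $(\overline{J}_q^\oplus + \underline{J}_q^\ominus)_i$. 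You also rightly flag the sign-convention slip in the statement: as written, $\delta^f_d \triangleq f_d(\underline{z},\overline{z}) - f_d(\overline{z},\underline{z})$ is non-positive and the bound would be vacuous; the intended quantity is the width $f_d(\overline{z},\underline{z}) - f_d(\underline{z},\overline{z})$, consistent with the definition of $\delta^s_k$ used later in the proof of Theorem~\ref{thm:stability}. One caveat worth making explicit: your assertion that $D^i_{jj}=1$ exactly on the coordinates where the partial derivative is non-negative presumes that $\overline{J}_q$ is sign-consistent with the constant sign pattern of the Jacobian (e.g., a tight supremum bound); with a loose upper bound of the wrong sign the routing, and hence monotonicity in the first argument, could fail. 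That assumption is inherent to the proposition as stated rather than a defect of your argument, but stating it would make the proof airtight.
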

\section{Problem Formulation} \label{sec:Problem}
\noindent\textbf{\emph{System Assumptions.}} 
Consider a partially unknown nonlinear discrete-time system with 
bounded noise 
\begin{align} \label{eq:system}
\begin{array}{ll}
x^+_{k}&=f(x_k,d_k)+Ww_k,\\
y_k&=g(x_k,d_k)+Vv_k, 
\end{array} 
\end{align}
where $x^+_k \triangleq x_{k+1}$, $x_k \in \mathcal{X} \subset \mathbb{R}^n$ is the state vector at time $k \in \mathbb{N}$, 
$y_k \in \mathbb{R}^l$ is the measurement vector and $d_k \in \mathcal{D} \subset \mathbb{R}^p$ is an unknown (dynamic) 
input vector whose dynamics is governed by an \emph{unknown\footnote{Note that if the mapping $h$ is partially known (i.e., consists of the sum of a known component $\hat{h}$ and an unknown component $\tilde{h}$), we can simply consider $d_{k+1}-\hat{h}$ as the output data for the model learning procedure to learn a model of the (completely) unknown function $\tilde{h}$.} 
mapping} $h:\mathbb{R}^{(n+p)} \rightarrow \mathbb{R}^{p}$:
\begin{align} \label{eq:input_dynamics}
d^+_{k}=h(x_k,d_k).
\end{align}
We refer to $z_k\triangleq\begin{bmatrix}x_k^\top & d_k^\top  \end{bmatrix}^\top \in \mathcal{Z} \subseteq \mathbb{R}^{n_z}$ as the augmented state, where $n_z \triangleq n+p$. 
The process noise $w_k \in [\underline{w},\overline{w}] \subset \mathbb{R}^{n_w}$ and the measurement noise $v_k \in [\underline{v},\overline{v}] \subset \mathbb{R}^{n_v}$ are assumed to be bounded.
Moreover, we assume the following.
\begin{assumption}\label{ass:known_input_output}
The bounds for disturbances and noises $ \underline{w}$, $\overline{w}$, $ \underline{v}$, and $\overline{v}$, along with the output signals $y_k$ 
are known at every time instant. Additionally, the initial condition $x_0$ lies in $\mathcal{X}_0 = [ \underline{x}_0,\overline{x}_0] \subseteq \mathcal{X}$ with known limits $\underline{x}_0$ and $\overline{x}_0$, while the lower and upper bounds, $\underline{z}_0$ and $\overline{z}_0$, for the initial augmented state ${z}_0 \triangleq \begin{bmatrix} x^\top_0 & d^\top_0 \end{bmatrix}^\top$ {are} available, i.e., $\underline{z}_0 \leq z_0 \leq \overline{z}_0$. .
\end{assumption}

\begin{assumption}\label{ass:mixed_monotonicity}
The mappings/functions $f:\mathbb{R}^{n_z} \rightarrow \mathbb{R}^n$ and $g:\mathbb{R}^{n_z} \rightarrow \mathbb{R}^l$ are given, locally Lipschitz and differentiable over their domains, while 
the function $h=\begin{bmatrix} h^\top_1 \dots h^\top_p \end{bmatrix}^\top:\mathbb{R}^{n_z} \times \rightarrow \mathbb{R}^p$ is \emph{unknown}, but each of its arguments $h_j:\mathbb{R}^{n_z} \rightarrow \mathbb{R}$, $\forall j \in \{1\dots p\}$ is known to be Lipschitz continuous. For simplicity and without loss of generality, we assume that the Lipschitz constant $\kappa^h_j$ is known; otherwise, 
we can estimate the Lipschitz constants with any desired precision 
using the approach in \cite[Equation (12) and Proposition 3]{Jin2020datadriven}. Furthermore, the lower and upper bounds of the Jacobian matrices of $f$ and $g$ are known, satisfying $\underline{J}_{s}(z) \leq J_s(z) \leq \overline{J}_{s}(z), \forall s \in \{f,g\}, \forall z \in \mathcal{Z} \triangleq \mathcal{X} \times \mathcal{D}$. 
\end{assumption} 
\begin{defn}[Correct Interval {Framers} and Model Resilient $\mathcal{H}_{\infty}$-Optimal Interval Observer]\label{defn:framers}
The signals/sequences $\overline{z},\underline{z}: {\mathbb{K}} \to \mathbb{R}^{n_z}$ are called upper and lower framers for the system augmented states $z_k \triangleq [x^\top_k \ d^\top_k]^\top$ of the nonlinear partially known system in \eqref{eq:system}--\eqref{eq:input_dynamics} if 
$\underline{z}_k \leq z_k \leq \overline{z}_k, \ \forall k \ge 0, {\forall w_k \in \mathcal{W},\forall v_k \in \mathcal{V}}$. 
Further, $\varepsilon_k \triangleq \overline{z}_k-\underline{z}_k$ is called the \emph{observer error} at time $k$. Any dynamical system whose states are correct framers for the augmented states of \eqref{eq:system}--\eqref{eq:input_dynamics}, i.e., with $\varepsilon_k\ge 0, \forall k\in\mathbb{K}$, is called a \emph{correct} interval framer for \eqref{eq:system}--\eqref{eq:input_dynamics}. Moreover,  the interval framer is $\mathcal{H}_{\infty}$-optimal (and if so is called interval observer), if the $\mathcal{H}_{\infty}$-gain of the framer error system $\tilde{\mathcal{G}}$ is minimized:
\begin{align}\label{eq:L1_Def}
\begin{array}{r}
\|\tilde{\mathcal{G}}\|_{\ell_2} \triangleq \sup_{\|\delta\|_{\ell_2}=1} \|\varepsilon_k\|_{\ell_2}, 
\end{array}   
\end{align}
where $\|\nu\|_{\ell_2} \triangleq \sum_0^\infty \sqrt{\|\nu_k\|^2}$ denotes the $\ell_2$ signal norm for $\nu \in \{\varepsilon,\delta\}$, respectively, while 
$\delta_k=\delta\triangleq [\delta_w^\top  \delta_v^\top]^\top$, 
$\delta_w \triangleq\overline{w}-\underline{w}$ and $\delta_v\triangleq\overline{v}-\underline{v}$.
\end{defn}

The state and model estimation problem can be stated through a model resilient observer design, stated as follows:
\begin{problem}[Improved State \& Model Estimation]\label{prob:SISIO}
Given the partially known nonlinear system \eqref{eq:system}--\eqref{eq:input_dynamics}, design a model resilient correct $\mathcal{H}_\infty$-optimal interval observer.   such that its framer error is input-to-state stable (ISS), that is,
\begin{align}\label{eq:L1-ISS}
 \|\varepsilon_k\|_2 \leq \beta(\|\varepsilon_0\|_2,t)+\rho(\|\delta\|_{L_\infty}), \forall k \geq 0,
\end{align}
where
$\beta$ and $\rho$ are class
$\mathcal{KL}$ and $\mathcal{K}$ functions, respectively, with the $L_\infty$ signal norm given by $\|\delta\|_{L_\infty}=\sup_{t\in [0,\infty)} \|\delta_k\|_{2} =\|\delta\|_2$ and $\delta_k$ defined as in \eqref{eq:L1_Def}.
\end{problem}

\section{Model Resilient Interval Observers} \label{sec:observer}
To address Problem \ref{prob:SISIO}, we first augment \eqref{eq:system} and \eqref{eq:input_dynamics} to obtain the following augmented system:
\begin{align}\label{eq:aug_system_2}
\begin{array}{ll}
z^+_{k}&= \mu(z_k)+\hat Ww_k,\\
y_k&=g(z_k)+Vv_k, 
\end{array}
\end{align}
where $\hat W \triangleq \begin{bmatrix} W \\ \mathbf{0}_{p\times n_w} \end{bmatrix}$, $ \mu(z) \triangleq \begin{bmatrix} f(z) \\ h(z) \end{bmatrix}=\begin{bmatrix} Az+\phi(z) \\ h(z) \end{bmatrix}$, and $g(z)=Cz+\phi(z)$, while $A$ and $C$ are computed through applying Proposition \ref{prop:JSS_decomp} such that $\phi$ and $\psi$ become JSS mappings. Our goal is to design an ISS interval observer that outputs correct framers for the augmented states $\{z_k \triangleq [x^\top \ d^\top]\}_{k\ge 0}$ of \eqref{eq:aug_system_2}. The proposed observer will leverage decomposition functions (cf. Definition \ref{def:decomp} and Proposition \ref{prop:tight_decomp}) to bound the known components of the system dynamics \eqref{eq:aug_system_2}. However, the mapping $h$, as part of the augmented system, is unknown. So, we first apply our previously developed approach in~\cite{Jin2020datadriven} to obtain a data-driven over-approximation model of $h$, followed by a new result on how to tractably compute bounds for the data-driven over-approximation model as a functions of the framers to tractably include them in a to-be-designed observer structure.  
\subsection{Data-Driven Abstractions \& Their Tight Bounds}
Building upon our previous result in \cite[Theorem 1]{Jin2020datadriven}, 
with the history of obtained compatible intervals 
up to the current time, $\{[\underline{z}_j,\overline{z}_j]\}_{j=0}^{k}$ (satisfying $\underline{z}_j \leq z_j \leq \overline{z}_j$), as the noisy input data and the compatible interval of unknown inputs, $\{[\underline{d}_j,\overline{d}_j]\}_{j=0}k$ (satisfying $\underline{d}_j \leq d_j \leq \overline{d}_j$), as the noisy output data, we can 
recursively construct a sequence of \emph{abstraction/over-approximation models} \{$\overline{h}_k,\underline{h}_k\}_{k=1}^{\infty}$ for the unknown input function $h$ (recall that $z \triangleq [x^\top \ d^\top]^\top$): 
\begin{subequations}
\begin{align}
& \underline{h}_{k,j}(z_k) \hspace{-0.05cm}=  \max_{\substack{t \in \{0,\dots, T-1\}}}  (\underline{d}_{k-t,j}\hspace{-0.1cm}-\hspace{-0.1cm}\kappa^{h}_j\|z_k\hspace{-0.1cm}-\hspace{-0.1cm}\tilde{z}_{k-t}\|)\hspace{-0.1cm}-\hspace{-0.1cm}\varepsilon_{k,j}, \label{lower_func}\\
& \overline{h}_{k,j}(z_k) \hspace{-0.05cm}=  \min\limits_{t \in \{0,\dots, T-1\}}  (\overline{d}_{k-t,j}\hspace{-0.1cm}+\hspace{-0.1cm}\kappa^{h}_j\|z_k\hspace{-0.1cm}-\hspace{-0.1cm}\tilde{z}_{k-t}\|)\hspace{-0.1cm}+\hspace{-0.1cm}\varepsilon_{k,j} , \label{upper_func}
 \end{align}
 \end{subequations}
where $j \in \{1\dots p\}$, $\{\tilde{z}_{k-t}=\frac{1}{2}(\overline{z}_{k-t}+\underline{z}_{k-t})\}_{t=0}^{k}$ and $\{\overline{d}_{k-t},\underline{d}_{k-t}\}_{t=0}^{k}$ are the \emph{augmented} input-output data set. Moreover, $\epsilon_{k,j}$ is the bounded abstraction/over-approximation noise added to guarantee that all possible realizations of the true function $h$ are within the provided bounds by the approximation model (cf. \cite[Theorem 1]{Jin2020datadriven} for more details).
Then, defining $\underline{h}_{k}(z_k)\triangleq [\underline{h}_{k,1}(z_k)\dots \underline{h}_{k,p}(z_k)]^\top$ and $\overline{h}_{k}(z_k)\triangleq [\overline{h}_{k,1}(z_k)\dots \overline{h}_{k,p}(z_k)]^\top$, as a result of \cite[Theorem 1]{Jin2020datadriven}, the data-driven over-approximation satisfies:
\begin{align}\label{eq:abs_ineq}
\underline{h}_{k}(z_k) \leq h(z_k) \leq \overline{h}_{k}(z_k). 
\end{align}
The following Lemma provides an approach to computing tractable bounds for the over-approximation functions $\{\overline h_k,\underline h_k\}_{k=0}^{\infty}$ using upper and lower framers.  
\begin{lem}[Tractable \& Tight Bounds for Data-Driven Abstractions]\label{lem:data_decomposition}
The data-driven over-approximation/abstraction model in \eqref{lower_func}--\eqref{upper_func} satisfies the following tight bounds:
\begin{align}\label{eq:abstraction_bound}
\begin{array}{rl}
 \underline{h}^*_{k,j}&\triangleq \max\limits_{z_k \in [\underline{z}_k,\overline z_k]}\underline{h}_{k,j}(z_k)\\  &=\hspace{-0.1cm}  \max\limits_{{t \in \{0,\dots, T-1\}}}  (\underline{d}_{k-t,j}\hspace{-0.1cm}-\hspace{-0.1cm}\kappa^h_j\|z^{*}_{k,t}\hspace{-0.1cm}-\hspace{-0.1cm}\tilde{z}_{k-t}\|)\hspace{-0.1cm}-\hspace{-0.1cm}\varepsilon_{k,j},\\
 \overline{h}^*_{k,j} &\triangleq  \min\limits_{z_k \in [\underline{z}_k,\overline z_k]}\overline{h}_{k,j}(z_k) \\
 &= \hspace{-0.1cm}  \min\limits_{t \in \{0,\dots, T-1\}}  (\overline{d}_{k-t,j}\hspace{-0.1cm}+\hspace{-0.1cm}\kappa^h_j\|z^{*}_{k,t}\hspace{-0.1cm}-\hspace{-0.1cm}\tilde{z}_{k-t}\|)\hspace{-0.1cm}+\hspace{-0.1cm}\varepsilon_{k,j},
\end{array}
\end{align}
where for $i=1,\dots,n_z$,
\begin{align}\label{eq:z*}
z^{*}_{k,t,i}=\argmax\limits_{\substack{\zeta\\ \zeta_i \in \{\underline{z}_{k,i},\overline{z}_{k,i}\},i=1,\dots,n_z}}\|\zeta-\tilde z_{k-t}\|.
\end{align}
\end{lem}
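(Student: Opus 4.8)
The plan is to exploit the two-level structure (a $\min$/$\max$ over the finite horizon $t\in\{0,\dots,T-1\}$, composed with an extremization over the box $[\underline z_k,\overline z_k]$) of the abstraction maps \eqref{lower_func}--\eqref{upper_func}. For each fixed $j$, $\underline h_{k,j}$ is the pointwise maximum over $t$ of the terms $\underline d_{k-t,j}-\kappa^h_j\norm{z_k-\tilde z_{k-t}}$, and $\overline h_{k,j}$ is the pointwise minimum over $t$ of $\overline d_{k-t,j}+\kappa^h_j\norm{z_k-\tilde z_{k-t}}$; in each case the only $z_k$-dependence enters through a single scaled Euclidean norm. I would therefore (i) solve the inner per-term extremization of that norm over the box exactly, and (ii) push the box-extremization through the outer $\min_t$/$\max_t$.

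First I would settle the inner step: extremizing $\norm{z_k-\tilde z_{k-t}}$ over $[\underline z_k,\overline z_k]$. Since $\norm{z_k-\tilde z_{k-t}}^2=\sum_{i=1}^{n_z}(z_{k,i}-\tilde z_{k-t,i})^2$ separates across coordinates, each coordinate may be optimized independently on the scalar interval $[\underline z_{k,i},\overline z_{k,i}]$. As $(z_{k,i}-\tilde z_{k-t,i})^2$ is convex, its maximum on an interval is attained at the endpoint farther from $\tilde z_{k-t,i}$; collecting these per-coordinate choices yields precisely the vertex $z^*_{k,t}$ of \eqref{eq:z*}, so that $\max_{z_k\in[\underline z_k,\overline z_k]}\norm{z_k-\tilde z_{k-t}}=\norm{z^*_{k,t}-\tilde z_{k-t}}$. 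This is the one fully rigorous ingredient, and it is exactly the distance appearing in \eqref{eq:abstraction_bound}.

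Next I would combine this with the outer horizon-extremization. Because the norm enters $\underline h_{k,j}$ with a minus sign, the inner problem that produces $z^*_{k,t}$ is $\min_{z_k}\bigl(\underline d_{k-t,j}-\kappa^h_j\norm{z_k-\tilde z_{k-t}}\bigr)=\underline d_{k-t,j}-\kappa^h_j\norm{z^*_{k,t}-\tilde z_{k-t}}$, and symmetrically for the $+$ sign in $\overline h_{k,j}$. Substituting term-by-term into the outer $\max_t$/$\min_t$ and carrying the additive constants $\mp\varepsilon_{k,j}$ reproduces the closed forms in \eqref{eq:abstraction_bound}. The step that needs care is the interchange of the box-extremization with the finite $\max_t$/$\min_t$: these are of opposite sense, so a priori only the weak-duality (minimax $\ge$ maximin) inequality is free. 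I would first certify the resulting expressions as valid one-sided bounds—which is all that is needed for them to feed the framer property through \eqref{eq:abs_ineq}—and then address their tightness.

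I expect the crux to be exactly this tightness/interchange step: showing that the vertex-based expressions in \eqref{eq:abstraction_bound} coincide with, rather than merely bound, the claimed box-extrema of $\underline h_{k,j}$ and $\overline h_{k,j}$. I would attempt this by exhibiting a vertex that attains the saddle value of the $\min$--$\max$, using that each horizon term is independently extremized at its own $z^*_{k,t}$; should exact equality fail in general, the honest fallback is to read \eqref{eq:abstraction_bound} as the tightest tractable bound compatible with \eqref{eq:abs_ineq}. Once the interchange is justified, the remaining manipulations—reinserting $z^*_{k,t}$ and the noise terms—are routine.
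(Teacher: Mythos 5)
Your inner step coincides, almost word for word, with the paper's entire proof: fix $t$, note that $\|z_k-\tilde z_{k-t}\|^2$ separates across coordinates, use convexity of each $(z_{k,i}-\tilde z_{k-t,i})^2$ on $[\underline z_{k,i},\overline z_{k,i}]$ to place the maximizer at an endpoint, and collect the per-coordinate choices into the vertex $z^*_{k,t}$ of \eqref{eq:z*}. The paper then substitutes $z^*_{k,t}$ term-by-term into \eqref{lower_func}--\eqref{upper_func} and declares \eqref{eq:abstraction_bound} proved; it never addresses the interchange step at all.

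The crux you flag is therefore a gap in the paper's argument, not in yours, and your instinct that exact equality may fail is correct: the equality in \eqref{eq:abstraction_bound} is false in general. First, the senses are mismatched as written: since the norm enters \eqref{lower_func} with a minus sign, plugging the norm-\emph{maximizer} $z^*_{k,t}$ into each term yields $\max_t \min_{z_k}(\cdot)$, whereas the left-hand side $\max_{z_k}\underline h_{k,j}(z_k)=\max_t\max_{z_k}(\cdot)$ would require the norm \emph{minimizer} (for $t=0$ that minimizer is $\tilde z_k$ itself, giving the term $\underline d_{k,j}-\varepsilon_{k,j}$, which generally exceeds the stated right-hand side). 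Second, even under the reading that is actually needed downstream, namely $\underline h^*_{k,j}=\min_{z_k\in[\underline z_k,\overline z_k]}\underline h_{k,j}(z_k)$, weak duality $\max_t\min_{z_k}(\cdot)\le\min_{z_k}\max_t(\cdot)$ can be strict: with $n_z=1$, $\kappa^h_j=1$, $\varepsilon_{k,j}=0$, current interval $[-1,1]$ (so $\tilde z_k=0$), past midpoints $\tilde z_{k-1}=2$, $\tilde z_{k-2}=-2$, and data $\underline d_{k,j}=0$, $\underline d_{k-1,j}=\underline d_{k-2,j}=1$, one computes
\begin{align*}
\max_{t}\min_{z\in[-1,1]}\bigl(\underline d_{k-t,j}-\|z-\tilde z_{k-t}\|\bigr)=-1
\quad\text{but}\quad
\min_{z\in[-1,1]}\max_{t}\bigl(\underline d_{k-t,j}-\|z-\tilde z_{k-t}\|\bigr)=-\tfrac{1}{2}.
\end{align*}
Consequently your fallback is not merely the honest option but the only correct one: the right-hand sides of \eqref{eq:abstraction_bound} are valid one-sided bounds, satisfying $\underline h^*_{k,j}\le \min_{z_k}\underline h_{k,j}(z_k)\le\underline h_{k,j}(z_k)\le h_j(z_k)$ and dually for $\overline h^*_{k,j}$, and this one-sided property is all that Lemma \ref{lem:correctness} and inequality \eqref{eq:delta_h} in the proof of Theorem \ref{thm:stability} ever use. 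Read as a proof of that bound version (vertex optimization plus weak duality), your proposal is sound and is in fact more careful than the paper's own proof; the literal equality, by contrast, cannot be salvaged by any argument.
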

\begin{proof}
Fix $t$ and consider $l_{k,t} \triangleq \|z_k-\tilde z_{k-t}\|$ in \eqref{lower_func} and \eqref{upper_func}. Our goal is to find $z^*_{k,t} \in [\underline{z}_k,\overline{z}_k]$ that maximizes $l_{k,t}=\sqrt{\sum_{i=1}^{n_z}|z_{k,i}-\tilde{z}_{k-t,i}|^2}$. Note that $\tilde z_{k-t}$ is fixed. Moreover, for each dimension $i$ the real-valued function $|z_{k,i}-\tilde{z}_{k-t,i}|^2$ is convex in $z_{k,i}\in [\underline z_{k,i},\overline z_{k,i}]$, hence attaining its maximum in one of the corners of the real interval $[\underline z_{k,i},\overline z_{k,i}]$. So, $z^*_{k,t}$ given in \eqref{eq:z*} maximizes $l_{k,t}$ and minimizes $-l_{k,t}$.   
\end{proof}
\begin{rem}[Data-Driven Decomposition Functions]
The results in Lemma \ref{lem:data_decomposition} can be considered an implicit data-driven computation of decomposition functions. In other words, if a mapping $h$ is known, then it can bounded in an interval domain, through the mix-monotone decomposition described in Propositions \ref{prop:JSS_decomp} and \ref{prop:tight_decomp}. As a counterpart, when $h$ is unknown, but satisfies some continuity assumptions, then all possible realizations of $h$ can be bounded through the data-driven scheme in \eqref{eq:abstraction_bound}.    
\end{rem}
\subsection{$\mathcal{H}_{\infty}$-Optimal Interval Observer Design}
Given the augmented dynamics in \eqref{eq:aug_system_2} and equipped with the results in Lemma \ref{lem:data_decomposition}, we propose the following dynamic system to provide framers for the sate and unknown input signals in \eqref{eq:system} (recall that $z_k=[x_k^\top \ d_k^\top]^\top$ and similarly for their framers $\underline z_k\triangleq [\underline x_k^\top \ \underline d_k^\top]^\top, \overline z_k\triangleq[\overline x_k^\top \ \overline d_k^\top]^\top$):
\begin{align}\label{eq:observer}
\begin{cases}
 \underline x^+_k &=(A-L_1C)^\oplus \underline z_k-(A-L_1C)^\ominus \overline z_k+L_1y_k\\
 &+\phi_d(\underline z_k,\overline z_k)+L_1^{\ominus}\psi_d (\underline z_k,\overline z_k)-L_1^{\oplus}\psi_d (\overline z_k,\underline z_k)\\
 &+W^\oplus\underline w- W^\ominus \overline w+(L_1V)^\ominus \underline v-(L_1V)^\oplus \overline v,\\
 \underline d^+_k &= \underline h^*_k+(L_2C)^\ominus \underline z_k-(L_2C)^\oplus \overline{z}_k-(L_2V)^\oplus \overline v\hspace{-.1cm}+\hspace{-.1cm}L_2y_k\\
 &+L_2^\ominus \psi_d (\underline z_k,\overline z_k)-L_2^\oplus \psi_d (\overline z_k,\underline z_k)+(L_2V)^\ominus \underline v,\\ 
 \overline x^+_k &=(A-L_1C)^\oplus \overline z_k-(A-L_1C)^\ominus \underline z_k+L_1y_k\\
 &+\phi_d(\overline z_k,\underline z_k)+L_1^{\ominus}\psi_d (\overline z_k,\underline z_k)-L_1^{\oplus}\psi_d (\underline z_k,\overline z_k)\\
 &+W^\oplus\overline w- W^\ominus \underline w+(L_1V)^\ominus \overline v-(L_1V)^\oplus \underline v,\\
 \overline d^+_k &= \overline h^*_k+(L_2C)^\ominus \overline z_k-(L_2C)^\oplus \underline{z}_k-(L_2V)^\oplus \underline v\hspace{-.1cm}+\hspace{-.1cm}L_2y_k\\
 &+L_2^\ominus \psi_d (\overline z_k,\underline z_k)-L_2^\oplus \psi_d (\underline z_k,\overline z_k)+(L_2V)^\ominus \overline v,
\end{cases}
\end{align} 
where $L_1$ and $L_2$ are to-be-designed observer gains, $\underline h^*_k,\overline h^*_k$ are given in Lemma \ref{lem:data_decomposition}, and $\phi_d,\psi_d$ are tight remainder-from decomposition functions of the JSS mappings $\phi,\psi$, respectively, computed through \eqref{eq:tight_decomposition} (cf. Proposition \ref{prop:tight_decomp} for more details). 
The following Lemma shows that the proposed system in \eqref{eq:observer} is indeed a correct framer for the partially known system \eqref{eq:system}--\eqref{eq:input_dynamics}, and that the abstraction/over-approximation model $\{\underline h_k,\overline h_k\}_{k\ge 0}$ becomes tighter or more precise with time.
\begin{lem}[{Correctness \& Model Improvement}]\label{lem:correctness}
Suppose the nonlinear partially known system \eqref{eq:system}--\eqref{eq:input_dynamics} (equivalently the augmented system \eqref{eq:aug_system_2}) satisfies Assumptions \ref{ass:known_input_output} and \ref{ass:mixed_monotonicity}. Then, for all $k \ge 0$ and any $w_k \in \mathcal{W}$, $v_k \in \mathcal{V}$, we have $\underline{z}_k \leq z_k \leq \overline{z}_k$. Here, $z_k$ denotes the state vector of \eqref{eq:aug_system_2} at time $k \ge 0$, and $[\underline{z}_k^\top \ \overline{z}_k^\top]^\top$ represents the state vector of \eqref{eq:observer} at the same time. Thus, the dynamical system given by \eqref{eq:observer} is a correct interval \emph{framer} for the nonlinear system \eqref{eq:aug_system_2} (equivalently for \eqref{eq:system}--\eqref{eq:input_dynamics}). Moreover, the following holds: 
\begin{align}\label{eq:tight_estimate}
\begin{array}{ll}
\underline{h}_0(z_0)\hspace{-.1cm} \leq\hspace{-.1cm} \dots\hspace{-.1cm} \leq\hspace{-.1cm} \underline{h}_k(z_k)\hspace{-.1cm} \leq \hspace{-.1cm}\dots\hspace{-.1cm} \leq \lim_{k \to \infty}\underline{h}_{k}(z_k) \hspace{-.1cm}\leq\hspace{-.1cm} {h}(z_k)\\
{h}(z_k) \hspace{-.1cm}\leq \lim_{k \to \infty}\overline{h}_{k}(z_k)\hspace{-.1cm} \leq \hspace{-.1cm}\dots\hspace{-.1cm} \leq \hspace{-.1cm}\overline{h}_k(z_k) \leq \hspace{-.1cm}\dots\hspace{-.1cm} \hspace{-.1cm}\leq \hspace{-.1cm}\overline{h}_0(z_0),
\end{array}
\end{align}
for any realization of the augmented sate sequence $\{z_k\}_{k \ge 0}$. In other words, the unknown input model estimations/abstractions are correct and become more precise or tighter with time. 
\end{lem}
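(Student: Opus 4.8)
The plan is to prove the two assertions separately: the framer inclusion $\underline z_k \le z_k \le \overline z_k$ by induction on $k$, and the improvement chain \eqref{eq:tight_estimate} by a monotonicity argument on the data-driven abstraction. For the framer property, the base case $k=0$ is immediate from Assumption \ref{ass:known_input_output}. For the inductive step I assume $\underline z_k \le z_k \le \overline z_k$ and bound each block of $z^+_k$ from \eqref{eq:aug_system_2}. Substituting the output relation $C z_k = y_k - \psi(z_k) - V v_k$ rewrites the state block as $x^+_k = (A-L_1 C)z_k + L_1 y_k + \phi(z_k) - L_1\psi(z_k) - L_1 V v_k + W w_k$, and adding the identically zero injection $L_2\bigl(y_k - C z_k - \psi(z_k) - V v_k\bigr)$ to $d^+_k = h(z_k)$ puts the input block in the same form.

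I then bound each resulting term over $[\underline z_k,\overline z_k]$: the linear maps $(A-L_1C)z_k$ and $-L_2 C z_k$ via Proposition \ref{prop:bounding}; the JSS remainders $\phi(z_k)$ and $\psi(z_k)$ via the mixed-monotone inequality \eqref{eq:dec_ineq} applied to $\phi_d,\psi_d$; and the noise terms $W w_k$, $L_i V v_k$ via the noise intervals together with the splitting $M = M^\oplus - M^\ominus$. Summing the lower (resp.\ upper) bounds of all terms reproduces exactly the right-hand sides of the $\underline x^+_k,\underline d^+_k$ (resp.\ $\overline x^+_k,\overline d^+_k$) equations of \eqref{eq:observer}, which closes the induction.

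The only block that is not purely algebraic is $h(z_k)$, and this is the step I expect to be the main obstacle. Here I invoke the abstraction guarantee \eqref{eq:abs_ineq}, $\underline h_k(z_k)\le h(z_k)\le \overline h_k(z_k)$, together with Lemma \ref{lem:data_decomposition}: because the true state lies in $[\underline z_k,\overline z_k]$ and $z^*_{k,t}$ maximizes $\|\zeta-\tilde z_{k-t}\|$ over that box, each summand indexed by $t$ in $\underline h^*_{k,j}$ (resp.\ $\overline h^*_{k,j}$) is no larger (resp.\ no smaller) than the same summand evaluated at the true $z_k$; since $a_t\le b_t$ for all $t$ implies $\max_t a_t\le \max_t b_t$ (and dually for $\min_t$), the outer optimization preserves the inequality and yields $\underline h^*_k \le \underline h_k(z_k)\le h(z_k)\le \overline h_k(z_k)\le \overline h^*_k$. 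The care needed is precisely in checking that replacing the online functions $\underline h_k,\overline h_k$ by their box-extremal surrogates $\underline h^*_k,\overline h^*_k$ does not break the inclusion—i.e.\ that the distance-maximizing evaluation point is the conservative choice for framing.

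For the improvement chain \eqref{eq:tight_estimate}, the key observation is that the index set over which \eqref{lower_func}--\eqref{upper_func} optimize grows monotonically with $k$: the historical centers $\tilde z_{j}$ and output intervals $[\underline d_{j},\overline d_{j}]$ are frozen once computed, so passing from $k$ to $k+1$ only appends a new admissible term. Because $\underline h_{k,j}$ is a maximum and $\overline h_{k,j}$ a minimum over this set, the former is non-decreasing and the latter non-increasing in $k$, and both are bounded by $h$ through \eqref{eq:abs_ineq}. The monotone-convergence theorem then supplies the limits, and passing \eqref{eq:abs_ineq} to the limit gives $\lim_{k\to\infty} \underline h_k(z_k)\le h(z_k)\le \lim_{k\to\infty} \overline h_k(z_k)$, which establishes \eqref{eq:tight_estimate} and completes the proof.
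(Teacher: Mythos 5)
Your proof is correct and takes essentially the same route as the paper's: your output-injection rewriting of the dynamics is exactly the paper's equation \eqref{eq:sys_gain_include}, your term-by-term bounds via Proposition \ref{prop:bounding}, the decomposition inequality \eqref{eq:dec_ineq}, and Lemma \ref{lem:data_decomposition} mirror the paper's argument, and your growing-index-set monotonicity argument is a self-contained version of what the paper cites from \cite{Jin2020datadriven}. The only distinction is one of detail, not of method: you make the induction and the term-wise comparison showing $\underline h^*_k \leq \underline h_k(z_k) \leq h(z_k) \leq \overline h_k(z_k) \leq \overline h^*_k$ explicit, where the paper simply invokes its lemmas and references.
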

\begin{proof}
Given any arbitrary $L_1,L_2$ with appropriate dimensions, the dynamics in \eqref{eq:aug_system_2} can be rewritten as 
\begin{align}\label{eq:sys_gain_include}
z^+_k=\begin{bmatrix} (A-LC)z_k+L_1(y_k-\psi(z_k)-Vv_k)+Ww_k\\
h(z_k)+L_2(y_k-Cz_k-\psi(z_k)-Vv_k).
\end{bmatrix}
\end{align}
Then, the correctness (framer) property follows from applying Propositions \ref{prop:bounding} and \ref{prop:tight_decomp} to \eqref{eq:sys_gain_include} to find upper and lower bounds for the known linear and nonlinear terms by the inequality in \eqref{eq:dec_ineq}, respectively, as well as applying Lemma \ref{lem:data_decomposition} to find $\underline h^*_k,\underline h^*_k$, i.e., bounds for the nonlinear function $h$. 

Furthermore, it directly follows from \cite[Theorem 1]{Jin2020datadriven} and the framer property (i.e., correctness) that the model estimates are correct, i.e, $\forall k \in \{0\dots \infty\}: \underline{h}_k(\zeta_k) \leq h(\zeta_k) \leq  \overline{h}_k(\zeta_k)$. Moreover, considering the data-driven abstraction procedure in \eqref{lower_func}--\eqref{upper_func}, note that by construction, the data set used at time step $k$ is a subset of the one used at time $k+1$. Hence, by \cite[Proposition 2]{Jin2020datadriven} the abstraction model satisfies \emph{monotonicity}, i.e., \eqref{eq:tight_estimate} holds.
 \end{proof}
\subsection{Input-to-State Stability}
Beyond the framer property, we provide sufficient conditions in the form of tractable semi-definite programs to design/synthesize observer gains satisfying input-to-state stability of the interval framers. This is an improvement to the approach in~\cite{khajenejad2021modelstate}, where a gain design procedure was lacking. 
\begin{thm}[$\mathcal{H}_{\infty}$-Optimal and 
ISS Observer Design]\label{thm:stability}
Suppose Assumptions \ref{ass:known_input_output} and \ref{ass:mixed_monotonicity} hold for the nonlinear partially known system in \eqref{eq:system}--\eqref{eq:input_dynamics}. Then, the following statements hold.
The correct interval framer proposed in \eqref{eq:observer} is $\mathcal{H}_{\infty}$-optimal with $\mathcal{H}_{\infty}$ system gain $\gamma$ if 
the following optimization program with linear matrix inequalities (LMI) is feasible:
\begin{align}\label{eq:SDP}
\begin{array}{rl}
\hspace{-.4cm}&\min\limits_{\substack{\{\alpha,\gamma,Q,\Omega,\tilde{L}_1,\tilde{L}_2,{L}_1^p,{L}_2^p,{L}^n_{1v},{L}^p_{1v},{L}^n_{2v},{L}^p_{2v},{L}_{2c}^p,{L}_{2c}^n,T^p,T^n\}}} \gamma \\
&s.t.   \begin{bmatrix} (1-\kappa^2)Q-\alpha I & 0 & \Gamma^\top  \\
                                    0 & \gamma I & \Omega^\top  \\
                                    \Gamma & \Omega & Q 
                                     \end{bmatrix} \succ 0, \\
                                &\hspace{.5cm} Q=\text{diag}(Q_1,Q_2), \ Q_1 \in \mathbb{D}^n, \ Q_1 \in \mathbb{D}^p \ \alpha,\gamma >0,\\
 &\hspace{.5cm}\Gamma =\begin{bmatrix}T^p+T^n+Q_1F_{\phi}+(L^p_1+L^n_1)F_{\psi}\\ L^p_{2c}+L^n_{2c}+(L^p_2+L^n_2)F_{\psi}+Q_2\begin{bmatrix}\mathbf{0}_{p\times n} & I_p \end{bmatrix}\end{bmatrix},\\
 &\hspace{.5cm}\Omega = \begin{bmatrix} Q_1|W| & L^p_{1v}+L^n_{1v} & 0\\0 & L^p_{2v}+L^n_{2v} & Q_2  \end{bmatrix},\ L^p_1-L^n_1=\tilde L_1,   \\
 &\hspace{.5cm} T^p-T^n=Q_1A-\tilde L_1C, \ L^p_{2c}-L^n_{2c}=\tilde L_2C, \\
 &\hspace{.5cm} L^p_{1v}- L^n_{1v}=\tilde L_1V, \ L^p_{2v}- L^n_{2v}=\tilde L_2V, \ L^p_2-L^n_2\hspace{-.1cm}=\hspace{-.1cm}\tilde L_2,\\
 &\hspace{.5cm}
  Y^\nu \geq 0, \forall \nu \in \{p,n\},
 \forall Y\in \{T, L_{1v},L_{2v}, L_1,L_2,\tilde L_{2c}\}.
\end{array}
\end{align}
where 
$F_{\phi}$ and $F_{\psi}$ are computed by applying 
Proposition \ref{prop:tight_decomp} to the JSS mappings $\phi$ and $\psi$, respectively, and $\kappa=\sqrt{\sum_{j=1}^p(\kappa^h_j)^2}$ is the Lipschitz constant of the mapping $h$.

Moreover, the 
$\mathcal{H}_{\infty}$-robust observer gains $L_1$ and $L_2$ can be computed as $L_{i}=Q_{i*}^{-1}\tilde{L}_{i*},\ i \in \{1,2\}$, where $(\tilde{L}_{1*},\tilde{L}_{2*})$ is an optimal solution 
to the program in \eqref{eq:SDP}.
\end{thm}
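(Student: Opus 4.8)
The plan is to certify the $\mathcal{H}_\infty$ bound \eqref{eq:L1_Def} (and the ISS estimate \eqref{eq:L1-ISS}) by constructing a quadratic storage function for the \emph{error} dynamics and then convexifying the resulting dissipation inequality into \eqref{eq:SDP}. First I would compute the error recursion for $\varepsilon_k \triangleq \overline z_k - \underline z_k$ by subtracting the lower-framer lines of \eqref{eq:observer} from their upper-framer counterparts. Using $(A-L_1C)^\oplus + (A-L_1C)^\ominus = |A-L_1C|$ (and likewise $|L_1|,|L_2|,|L_1V|,|L_2V|,|L_2C|$ for the gain-weighted terms), together with the tight width bounds $\phi_d(\overline z_k,\underline z_k)-\phi_d(\underline z_k,\overline z_k)\le F_\phi\,\varepsilon_k$ and $\psi_d(\overline z_k,\underline z_k)-\psi_d(\underline z_k,\overline z_k)\le F_\psi\,\varepsilon_k$ from Proposition \ref{prop:tight_decomp}, every noise and output term cancels and each difference collapses into a \emph{non-negative} linear combination of $\varepsilon_k$, the disturbance widths $\delta$, and the abstraction width $\eta_k \triangleq \overline h^*_k - \underline h^*_k$ of Lemma \ref{lem:data_decomposition}. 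Since $\varepsilon_k \ge 0$ for all $k$ by Lemma \ref{lem:correctness} and all coefficient matrices so obtained are non-negative, I may pass to the comparison system $\varepsilon_k^+ \le G\,\varepsilon_k + H\,\delta + \hat E\,\eta_k$, whose trajectories dominate the true error component-wise; bounding the comparison system therefore bounds $\varepsilon_k$.

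The second ingredient controls $\eta_k$. Because each $h_j$ is $\kappa_j^h$-Lipschitz, the data-driven optima of Lemma \ref{lem:data_decomposition} inherit the aggregate bound $\|\eta_k\| \le \kappa\|\varepsilon_k\|$ with $\kappa = \sqrt{\sum_{j=1}^p(\kappa_j^h)^2}$, i.e.\ the quadratic constraint $\eta_k^\top \eta_k \le \kappa^2\,\varepsilon_k^\top\varepsilon_k$. I would then impose the discrete bounded-real dissipation inequality $V(\varepsilon_k^+) - V(\varepsilon_k) + \varepsilon_k^\top\varepsilon_k - \gamma\,\delta^\top\delta \le 0$, with performance output taken to be $\varepsilon_k$ itself, for the storage function $V(\varepsilon) = \varepsilon^\top Q\,\varepsilon$ with diagonal $Q = \mathrm{diag}(Q_1,Q_2)\succ 0$; summed over $k$ this telescopes to the $\mathcal{H}_\infty$ bound \eqref{eq:L1_Def} with system gain $\gamma$ and, for $\delta = 0$, yields the $\mathcal{KL}$/$\mathcal{K}$ ISS estimate \eqref{eq:L1-ISS}. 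The unknown $\eta_k$ is removed by the S-procedure: adjoining $\alpha(\kappa^2\varepsilon_k^\top\varepsilon_k - \eta_k^\top\eta_k)\ge 0$ with multiplier $\alpha > 0$ is precisely what produces the $(1-\kappa^2)Q - \alpha I$ block and folds the $Q_2$-weighted $h$-contribution into the $d$-row of $\Gamma$. Expanding $V(\varepsilon_k^+)$ and applying a Schur complement against the lower-right $Q$ block of \eqref{eq:SDP} then reproduces this dissipation inequality, where $\Gamma$ and $\Omega$ are exactly the error-transition and disturbance matrices \emph{pre-multiplied by} $Q$.

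The remaining and hardest step is to render that matrix inequality jointly \emph{linear} in the decision variables, despite the bilinear products $Q_iL_i$ and the element-wise operations $(\cdot)^\oplus,(\cdot)^\ominus$ appearing in \eqref{eq:observer}. I would first linearize the gains through $\tilde L_i = Q_i L_i$, so that $Q_1(A-L_1C) = Q_1A - \tilde L_1C$, the $Q_iV$-terms become $\tilde L_iV$, and so on. The genuinely non-convex obstacle is the absolute values $|Q_1A-\tilde L_1C|$, $|\tilde L_i|$, $|\tilde L_iV|$, $|\tilde L_2C|$: I replace each by a sum of two auxiliary non-negative matrices (the variables $T^{p},T^{n}$, $L_1^{p},L_1^{n}$, $L_{2c}^{p},L_{2c}^{n}$, $L_{1v}^{p},L_{1v}^{n}$, $L_{2v}^{p},L_{2v}^{n}$, $L_2^{p},L_2^{n}$) constrained by $Y^\nu\ge 0$ and the reconstruction identities $T^p-T^n=Q_1A-\tilde L_1C$, $L_1^p-L_1^n=\tilde L_1$, $L_{2c}^p-L_{2c}^n=\tilde L_2C$, $L_{1v}^p-L_{1v}^n=\tilde L_1V$, $L_{2v}^p-L_{2v}^n=\tilde L_2V$, and $L_2^p-L_2^n=\tilde L_2$. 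The point I must verify is that this is a \emph{sound over-approximation}: for any feasible split, $Y^p + Y^n \ge |Y^p - Y^n|$ holds entry-wise, so the $\Gamma,\Omega$ built from the splits dominate the exact $Q$-weighted transition and disturbance matrices; since the comparison system is monotone ($\varepsilon_k\ge 0$ acted on by non-negative matrices) and $Q$ is diagonal, larger $\Gamma,\Omega$ only enlarge $V(\varepsilon_k^+)$, so feasibility of \eqref{eq:SDP} with the over-bounds certifies the dissipation inequality, and hence the $\mathcal{H}_\infty$ gain, for the true framer. Finally, optimality is immediate: $\gamma$ is the minimized objective, and any optimizer returns the $\mathcal{H}_\infty$-optimal gains via $L_i = Q_{i*}^{-1}\tilde L_{i*}$, $i\in\{1,2\}$, since each $Q_{i*}$ is diagonal and positive, hence invertible.
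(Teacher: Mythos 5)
Your overall architecture is the same as the paper's: derive the nonnegative error recursion from \eqref{eq:observer}, pass to a monotone comparison system using Proposition~\ref{prop:tight_decomp} and Lemma~\ref{lem:data_decomposition}, absorb the unknown-dynamics width through the Lipschitz constant $\kappa$, Schur-complement against the lower-right $Q$ block, and linearize via $\tilde L_i = Q_iL_i$ plus the positive/negative splits (your soundness argument for the splits, $Y^p+Y^n \ge |Y^p-Y^n|$ with monotonicity, is in fact more explicit than the paper's citation). However, there are two genuine gaps in the middle of your argument.

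First, the claimed bound $\|\eta_k\| \le \kappa\|\varepsilon_k\|$ on the abstraction width $\eta_k \triangleq \overline h^*_k - \underline h^*_k$ is false. Evaluating the $\max$/$\min$ in \eqref{eq:abstraction_bound} at the single index $t=0$ and using that $z^*_{k,0}$ is a corner and $\tilde z_k$ the midpoint of $[\underline z_k, \overline z_k]$, the correct componentwise bound (the paper's \eqref{eq:delta_h}) is $\eta_{k,j} \le \delta^d_{k,j} + \kappa^h_j\|\varepsilon_k\| + 2\epsilon_{k,j}$: it contains a \emph{linear} term $\delta^d_k = \begin{bmatrix}\mathbf{0}_{p\times n} & I_p\end{bmatrix}\varepsilon_k$ and, crucially, the exogenous abstraction noise $2\epsilon_{k,j}$, which does not vanish when $\varepsilon_k \to 0$. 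Consequently your quadratic constraint $\eta_k^\top\eta_k \le \kappa^2\varepsilon_k^\top\varepsilon_k$ cannot hold, and the S-procedure step built on it collapses. This is not cosmetic: the noise $2\epsilon_k$ must be adjoined to the disturbance vector ($\tilde w_k = [\delta w^\top\ \delta v^\top\ 2\epsilon_k^\top]^\top$ in the paper), which is exactly why $\Omega$ in \eqref{eq:SDP} has a \emph{third} block column containing $Q_2$ and why $\Gamma$'s second block row contains $Q_2\begin{bmatrix}\mathbf{0}_{p\times n} & I_p\end{bmatrix}$. In your formulation, with only $(\delta_w,\delta_v)$ as disturbances and $\eta_k$ handled purely by the multiplier, the Schur complement of \eqref{eq:SDP} does not reproduce your dissipation inequality, so feasibility of \eqref{eq:SDP} certifies nothing about your comparison system.

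Second, your assertion that adjoining $\alpha(\kappa^2\varepsilon_k^\top\varepsilon_k - \eta_k^\top\eta_k)$ ``precisely produces'' the block $(1-\kappa^2)Q - \alpha I$ is not derived and does not follow from the standard S-procedure: that procedure yields a quadratic form with an \emph{additional} block row/column for the nonlinearity variable and a term $+\alpha\kappa^2 I$ (not $-\kappa^2 Q$) in the $\varepsilon$-block, so the constants sit in different places than in \eqref{eq:SDP}. The paper does not attempt this derivation; after obtaining the comparison system \eqref{eq:comparison} with the Lipschitz term $f_\ell(\varepsilon) = \|\varepsilon\|\begin{bmatrix}\mathbf{0}_n^\top & (\kappa^h)^\top\end{bmatrix}^\top$, it invokes \cite[Lemma 2]{de2002extended} to pass directly to the matrix inequality \eqref{eq:lip_ISS}, whose Schur complement is \eqref{eq:SDP}. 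To close your argument you must either cite such a result or actually carry out the manipulation that eliminates the nonlinearity block and places $\kappa$ against $Q$; as written, the identification of your dissipation LMI with \eqref{eq:SDP} is asserted, not proved.
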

\begin{proof}
Defining $\varepsilon_k \triangleq \overline{z}_k-\underline{z}_k$, $\delta \nu \triangleq \overline \nu-\underline \nu, \forall \nu \in \{d_k,w,v\}$, $\delta^s_k \triangleq s_d(\overline{z}_k,\underline{z}_k)-s_d(\underline{z}_k,\overline{z}_k), \forall s \in \{\phi,\psi\}$, and starting from the framer system in \eqref{eq:observer}, the observer error dynamics, which by construction is a non-negative system, can be obtained:
\begin{align}\label{eq:err_sys}
\varepsilon^+_k=\begin{bmatrix}|A-L_1C|\varepsilon_k+\delta^\phi_k+L_1\delta^\psi_k+|L_1V|\delta v+|W|\delta w\\
|L_2C|\varepsilon_k+L_2\delta^\psi_k+\delta^h_k+|L_2V|\delta v \end{bmatrix},
\end{align}
where $\delta^h_k \triangleq \overline h^*_k-\underline h^*_k=[\delta^h_{k,1}\dots\delta^h_{k,p}]^\top=[(\overline h^*_{k,1}-\underline h^*_{k,1})\dots (\overline h^*_{k,p}-\underline h^*_{k,p})]^\top$.
On the other hand, from \eqref{eq:abstraction_bound} in Lemma \ref{eq:abstraction_bound}, for $j=1,\dots,p$, we derive
\begin{align}\label{eq:delta_h}
\begin{array}{rl}
\delta^h_{k,j} &\le \overline d_{k,j}-\underline d_{k,j}+2\kappa^h_j\|z^{j*}_{k,0}-\tilde{z}_k\|+2\epsilon_{k,j}\\
&=\delta^d_{k,j}+\kappa^h_j\|\delta^z_{k,j}\|+2\epsilon_{k,j},
\end{array}
\end{align}
where the inequality holds since taking ``$\min$" and ``$-\max$" of finite number of values \eqref{eq:abstraction_bound} (corresponding to $t=0,\dots,T-1$) is less than or equal to each of the values, including the one corresponding to $t=0$. Moreover, the equality follows from the fact that $z^*_{k,0,j}=\overline z_{k,j}$ or $z^*_{k,0,j}=\underline z_{k,j}$, and $\tilde{z}_{k,j}$ is the midpoint of the interval $[\underline z_{k,j},\overline z_{k,j}]$, and hence their distance (in $2$-norm) is half of the distance between $\underline z_{k,j}$ and $\overline z_{k,j}$, i.e., $\delta^z_{k,j}$. By augmenting all the inequalities in \eqref{eq:delta_h}, together with \eqref{eq:err_sys} and by the non-negativity of the error dynamics, and given the results in Proposition \ref{prop:tight_decomp} in addition to the fact that $\delta^d_k=\begin{bmatrix}\mathbf{0}_{p\times n} & I_p\end{bmatrix} \varepsilon_k$, we obtain the following comparison system (with $\kappa^h \triangleq [\kappa^h_1\dots\kappa^h_p]^\top$ denoting the vector of Lipschitz constants): 
\begin{align}\label{eq:comparison}
\varepsilon^{+}_k \le A_z \varepsilon_k+B_z\delta \tilde w_k+f_{\ell}(\varepsilon_k). 
\end{align}
where $\tilde w_k \triangleq\begin{bmatrix}\delta w^\top & \delta v^\top & 2\epsilon^\top_k \end{bmatrix}^\top$ is the augmented noise vector, and
\begin{align*}
\begin{array}{c}
A_z \triangleq \begin{bmatrix} |A-L_1C|+F_{\phi}+|L_1|F_{\psi}\\
|L_2C|+|L_2|F_{\psi}+\begin{bmatrix}\mathbf{0}_{p\times n} & I_p \end{bmatrix}\end{bmatrix},\\
B_z=\begin{bmatrix} |W| & |L_1V| & 0 \\
0 & |L_2V| & I\end{bmatrix}.
\end{array}
\end{align*}
Furthermore, $f_{\ell}(\varepsilon)\triangleq\|\varepsilon\|\begin{bmatrix} \mathbf{0}^\top_n & (\kappa^h)^\top\end{bmatrix}^\top$ is a Lipschitz continuous nonlinear mapping with the Lipschitz constant $\kappa=\sum_{j=1}^p(\kappa^h_j)^2$ (since $2$-norm is a Lipschitz function with its Lipschitz constant being one). By~\cite[Lemma 2]{de2002extended} the comparison system in \eqref{eq:comparison} is ISS if the following matrix inequalities are feasible:
\begin{align}\label{eq:lip_ISS}
\hspace{-.2cm}\begin{array}{rl}
&\begin{bmatrix}
(1-\kappa^2)Q-\alpha I-A_z^\top Q A_z & -A_z^\top Q B_z \\
-B_z^\top Q A_z & \gamma I-B_z^\top Q B_z
\end{bmatrix} \succ 0, \\
&Q \succ 0, \ \gamma,c >0.
\end{array}
\end{align}
By applying Schur complement, choosing a positive diagonal $Q$ (that implies $Y \in \{L,LC,LV\} \Rightarrow Q|Y|=\tilde{Y}\triangleq |QY|$), as well as the fact that we can equivalently replace $|\tilde{Y}|$ with the sum of two non-negative matrix variables ${Y}^p$ and ${Y}^n$ such that $\tilde{Y}=\tilde{Y}^p-\tilde{Y}^n$ (cf. \cite[Lemma 2 and the proof of Theorem 2]{efficientobserverecc25}), for the matrix inequalities in \eqref{eq:lip_ISS} to hold, it suffices that the program in \eqref{eq:SDP} be feasible. 
\end{proof}
\section{Illustrative Example} \label{sec:examples}
We consider a slightly modified version of \moha{the continuous-time predator-prey system} in \cite{pylorof2019design}: 
\begin{align*}
\begin{array}{rl}
     \dot{x}_{1,t} & = -x_{1,t}x_{2,t} - x_{2,t} + d_t + w_{1,t},  \\
     \dot{x}_{2,t} & = x_{1,t}x_{2,t} + x_{1,t} + w_{2,t}, \\
     \dot{d}_t & = 0.1(\cos(x_{1,t}) - \sin(x_{2,t})) + w_{d,t},
\end{array}
\end{align*}
where the (unknown input) dynamics  $\dot{d}$ is an unknown function, and the output equations are given by: \begin{align*}
\begin{array}{c}
y_{1,t}=x_{1,t} + v_{1,t},
 y_{2,t}=x_{2,t} + v_{2,t}, 
 y_{3,t} = \sin(d_t) + v_{3,t}.
 \end{array}
\end{align*} 
By applying the forward Euler method to discretize the system 
it can be described in the form \eqref{eq:system}--\eqref{eq:input_dynamics} with the following parameters: 
$n=l=p=2$, $m=1$, $f=\begin{bmatrix} f_1 & f_2\end{bmatrix}^\top$,  
$g=\begin{bmatrix} g_1 & g_2 & g_3\end{bmatrix}^\top$, 
$u_k=0$, $w_k=[w_{1,k} \ w_{2,k} \ w_{d,k}]^\top$, $v_k=[v_{1,k} \ v_{2,k} \ v_{3,k}]^\top$, 
$\overline{v}=-\underline{v}=\overline{w}=-\underline{w}=\begin{bmatrix} 0.1 & 0.1 & 0.1 \end{bmatrix}^\top$, $\overline{x}_0=\begin{bmatrix} 0 & 0.6 \end{bmatrix}^\top$, $\underline{x}_0=\begin{bmatrix} -0.35 & -0.1 \end{bmatrix}^\top$, where
\begin{align*}
\begin{array}{rl}
f_1(z_k)&=x_{1,k} + \delta_t(-x_{1,k}x_{2,k} - x_{2,k} + u_k + d_k + w_{1,k}),\\ 
f_2(z_k)&=x_{2,k} + \delta_t(x_{1,k}x_{2,k} + x_{1,k} + w_{2,k}),\\
h(z_k) &= d_k+\delta_t(0.1(\cos(x_{1,k}) - \sin(x_{2,k})) + w_{d,k})\\
g_1(z_k)&=x_{1, k} + v_{1,k},\\
 g_2(z_k)&=x_{2,k} + v_{2,k}, \
 g_3(z_k) = \sin(d_k) + v_{3,k},
 \end{array}
\end{align*} 
 with sampling time $\delta_t = 0.01s$. 
Moreover, by \cite[Proposition 2]{khajenejad2021modelstate} (with abstraction slopes set to zero) we can obtain finite-valued upper and lower bounds (horizontal abstractions) for the partial derivatives of $f$ as: 
$\underline J_f=\begin{bmatrix} 0.994 & -0.01 & 0.99 \\ 0.009 & 0.9965 & -0.01 \end{bmatrix}$, $\overline J_f=\begin{bmatrix} 1.006 & -0.0065 & 1.01 \\ 0.016 & 1 & 0.01 \end{bmatrix}$.
Solving the LMIs in \eqref{eq:SDP} through Yalmip~\cite{YALMIP}, the stabilizing gains are $L_1=\begin{bmatrix}0.0543 &   0.6756  &  0.2126\\
   -0.0221 & 0.6060  & -0.0158 \\ \end{bmatrix},L_2=\begin{bmatrix} -0.2962 &   0.0911  &  0.2051 \end{bmatrix}$.
\begin{figure}[t] 
\centering
{\includegraphics[width=0.48\columnwidth]{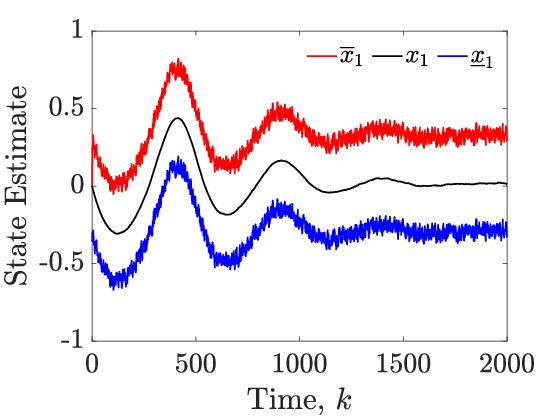}} 
{\includegraphics[width=0.48\columnwidth]{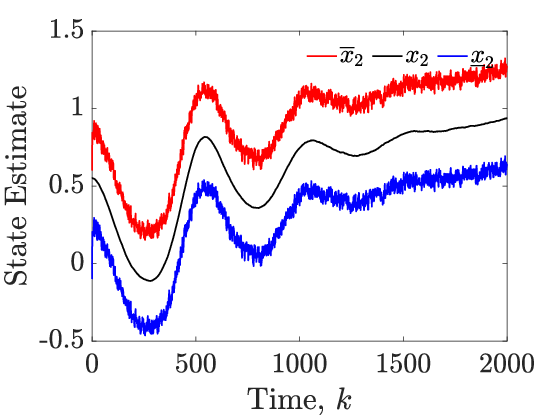}} 
\caption{{\small Actual states, $x_{1,k}$, $x_{2,k}$, as well as their corresponding upper and lower framers outputted by the proposed observer \eqref{eq:observer}, $\overline{x}_{1,k}$, $\underline{x}_{1,k}$, $\overline{x}_{2,k}$, $\underline{x}_{1,k}$ in a horizon of $K=2000$ time steps. \label{fig:variances2}}}
\vspace{-0.25cm}
\end{figure} 
Figure \ref{fig:variances2} shows the interval framers of the states, obtained by applying the proposed observer in \eqref{eq:observer} in a horizon of $K=2000$ time steps. For the sake of comparison, we also applied our previous approach in~\cite{khajenejad2021modelstate} which is based on computing affine over-approximation of the nonlinear dynamics at each time step, and hence, requires running  computationally expensive online optimizations. While the newly proposed approach took only $0.195287$ seconds to run and returned the framers, the one in~\cite{khajenejad2021modelstate} failed to provide results in a timely manner, i.e, we had to terminate it manually after almost ten hours. 

Furthermore, as can be seen in Figure \ref{fig:variances3}, only after we reduced the horizon to $K=250$ time steps, we observed that our previous approach in~\cite{khajenejad2021modelstate} returned estimates $\underline{x}^{\text{MZ}}_k,\overline{x}^{\text{MZ}}_k$, which were tighter than the ones outputted by the method in this paper, i.e., 
$\underline{x}_k,\overline{x}_k$. However, even for this shorter time horizon, it took $1080.79$ seconds (almost three hours) for the approach in~\cite{khajenejad2021modelstate} to return the results, while the computation time for the new approach was only $0.1201$ seconds.
\begin{figure}[t]
\begin{center}
{\includegraphics[width=0.48\columnwidth]{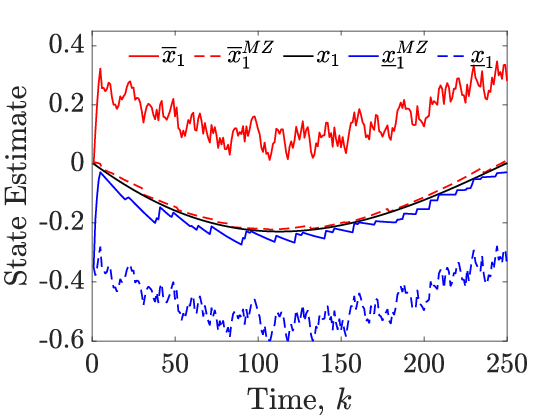}} 
{\includegraphics[width=0.48\columnwidth]{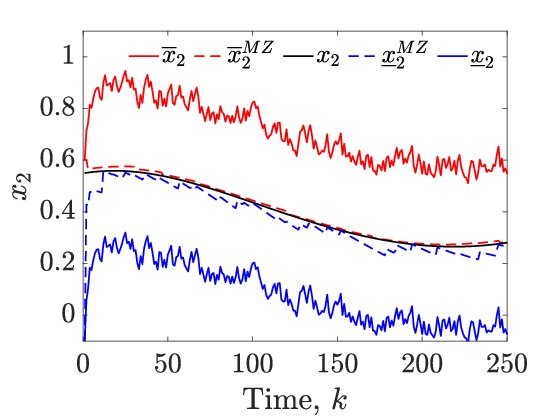}} 
\vspace{-0.1cm}
\caption{{\small Actual state, $x_k$,  
as well as its estimated framers, outputted by the proposed observer \eqref{eq:observer}, $\overline{x}_{1,k}$, $\underline{x}_{1,k}$, $\overline{x}_{2,k}$, $\underline{x}_{2,k}$ in addition to the state framers obtained from the approach in~\cite{khajenejad2021modelstate}, $\overline{x}^{\text{MZ}}_{1,k}$, $\underline{x}^{\text{MZ}}_{1,k}$, $\overline{x}^{\text{MZ}}_{2,k}$, $\underline{x}^{\text{MZ}}_{2,k}$, in a horizon of $K=250$ time steps. \label{fig:variances3}}}\vspace*{-0.25cm}
\end{center}
\end{figure} 
Finally, Figure \ref{fig:abstraction} shows the framer intervals of the learned/estimated unknown dynamics model that frame the actual unknown dynamics function $h$, i.e., satisfies \eqref{eq:abs_ineq}, as well as the global (affine) function approximation, computed via~\cite[Proposition 2]{khajenejad2021modelstate} at the initial step.
\begin{figure}[t]
\begin{center}
\includegraphics[scale=0.22,trim=15mm 3mm 10mm 10mm,clip]{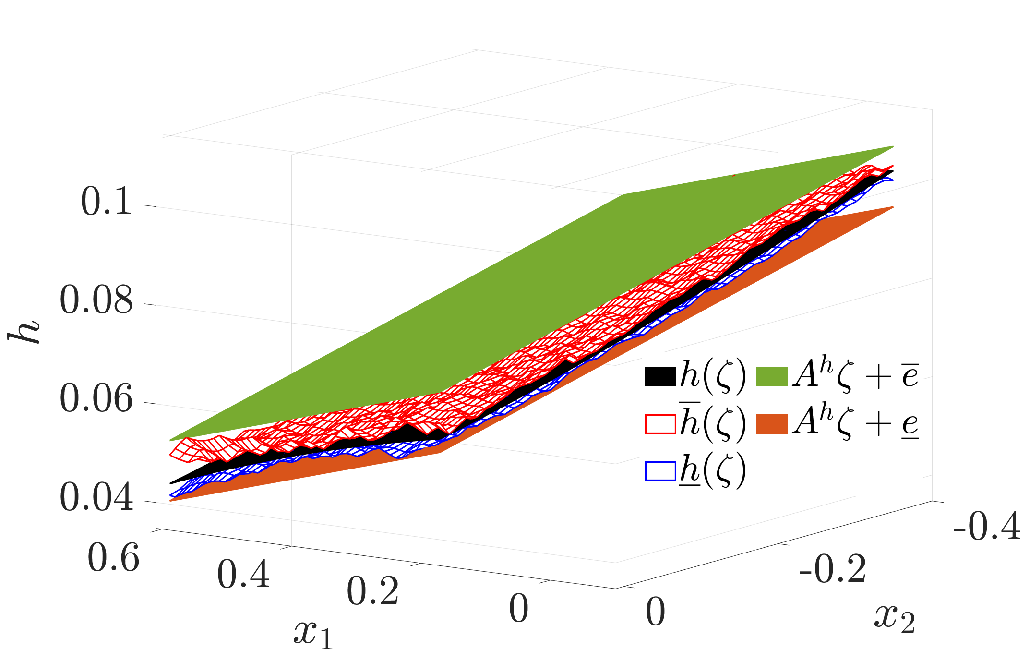}
\caption{{\small Actual unknown dynamics function $h(\zeta)$, its upper and lower framer intervals 
$\overline{h}_{k},\underline{h}_{k}$ at time step $k=250$, as well as its global abstraction $A^h\zeta+\overline{e}^h,A^h\zeta+\underline{e}^h$ via \cite[Proposition 2]{khajenejad2021modelstate} at the initial step. 
\label{fig:abstraction}}}
\end{center}
\end{figure}
\section{Conclusion \& Future Work} \label{sec:conclusion}
 In this paper, the problem of synthesizing interval observers for partially unknown nonlinear systems with bounded noise was addressed, aiming to simultaneously estimate system states and learn a model of the unknown dynamics. A framework was developed by leveraging Jacobian sign-stable (JSS) decompositions, tight decomposition functions for nonlinear systems, and a data-driven over-approximation approach, enabling the recursive computation of interval estimates that were proven to enclose the true augmented states. Tight and tractable bounds for the unknown dynamics were constructed as functions of current and past interval framers, allowing for their systematic integration into the observer design. Furthermore, semi-definite programs (SDP) were formulated to synthesize observer gains, ensuring input-to-state stability and optimality of the proposed design. Finally, simulation results demonstrated that the proposed approach outperformed the method in~\cite{khajenejad2021modelstate} in terms of computational efficiency. Future work will consider (partially) unknown continents-time and hybrid dynamical systems.
 
\bibliographystyle{unsrturl}

\bibliography{biblio}


\end{document}